\DeclareMathSymbol{\shortminus}{\mathbin}{AMSa}{"39}
\definecolor{googleblue}{RGB}{34, 0, 204}
\definecolor{carmine}{RGB}{150, 0, 24}
\definecolor{darkgreen}{RGB}{0, 80, 0}
\definecolor{medblue}{RGB}{0, 0, 100}
\newcommand{\tmem}[1]{{\em #1\/}}
\newcommand{\tmop}[1]{\ensuremath{\operatorname{#1}}}
\newtheorem{definition}{Definition}
\newtheorem{lemma}{Lemma}
\newtheorem{theo}{Theorem}
\providecommand{\customgenericname}{}
\newcommand{\newcustomtheorem}[2]{%
  \newenvironment{#1}[1]
  {%
   \renewcommand\customgenericname{#2}%
   \renewcommand\theinnercustomgeneric{##1}%
   \innercustomgeneric
  }
  {\endinnercustomgeneric}
}
\def\bra#1{\langle#1|} \def\ket#1{|#1\rangle}
\def\proj#1{\ket{#1}\!\bra{#1}}
\def\id{{\mathbb I}}
\def\be{\begin{equation}}
\def\ee{\end{equation}}
\def\tr{\operatorname{tr}}
\def\A{\mathcal{A}}
\def\B{\mathcal{B}}
\def\C{\mathcal{C}}
\def\Y{\mathcal{Y}}
\def\X{\mathcal{X}}
\begin{document}

\title{Genuine Network Multipartite Entanglement}

\author{Miguel Navascu{\'e}s}
\affiliation{Institute for Quantum Optics and Quantum Information (IQOQI) Vienna, Austrian Academy of Sciences, Boltzmanngasse 3, 1090 Vienna, Austria}
\author{Elie Wolfe}
\affiliation{Perimeter Institute for Theoretical Physics, 31 Caroline St. N., Waterloo, Ontario, Canada, N2L 2Y5}
\author{Denis Rosset}
\affiliation{Perimeter Institute for Theoretical Physics, 31 Caroline St. N., Waterloo, Ontario, Canada, N2L 2Y5}
\author{Alejandro Pozas-Kerstjens}
\affiliation{Departamento de An\'alisis Matem\'atico, Universidad Complutense de Madrid, 28040 Madrid, Spain}
\affiliation{ICFO-Institut de Ci\`encies Fot\`oniques, The Barcelona Institute of Science and Technology, 08860 Castelldefels (Barcelona), Spain}

\begin{abstract}
  The standard definition of genuine multipartite entanglement stems from the need to assess the quantum control over an ever-growing number of quantum systems.
  We argue that this notion is easy to hack: in fact, a source capable of distributing bipartite entanglement can, by itself, generate genuine $k$-partite entangled states for any $k$.
  We propose an alternative definition for genuine multipartite entanglement, whereby a quantum state is \emph{genuinely network $k$-entangled} if it cannot be produced by applying local trace-preserving maps over several (\hspace{-1pt}$k$-\hspace{-1pt}1\hspace{-1pt})-partite states distributed among the parties, even with the aid of global shared randomness.
   We provide analytic and numerical witnesses of genuine network entanglement, and we reinterpret many past quantum experiments as demonstrations of this feature.
\end{abstract}

{\maketitle}

The existence of multipartite quantum states that cannot be prepared locally is at the heart of many communication protocols in quantum information science, such as quantum teleportation~{\cite{teleportation}}, dense coding~{\cite{denseCoding}}, entanglement-based quantum key distribution~{\cite{qkd}} and the violation of Bell inequalities~{\cite{bell,BrunnerRMP}}.
Most importantly, for the last two decades, the ability to entangle an ever-growing number of photons or atoms has been regarded as a benchmark for the experimental quantum control of optical systems~{\cite{sixPhotons,eightPhotons,tenPhotons,hundredsAtoms}}.

Since any multipartite quantum state where two parts share a singlet can be
regarded as ``entangled,'' another, more demanding notion of entanglement was required to assess the progress of quantum technologies.
The accepted answer was genuine multipartite entanglement~{\cite{wrongDef,wrongDef2,wrongDef3}}.
Genuine multipartite entanglement has since become a standard for quantum many-body experiments~{\cite{sixPhotons,eightPhotons,tenPhotons,hundredsAtoms,thousandsAtoms}}.
But, is it a universal measure?

In this paper, we argue the opposite and present an alternative and stronger definition, {\emph{genuine network multipartite entanglement}}, which we formulate in terms of quantum networks~{\cite{quantNetwork}}.
First, we define and compare the two approaches.
Next, we present general criteria to detect genuine network entanglement and discuss the tightness of the bounds so obtained.
Finally, we single out past experiments in quantum optics that can be reinterpreted as stronger demonstrations of genuine network entanglement.

\paragraph{Multipartite entanglement.}
A $n$-partite quantum state can be identified with a bounded Hermitian positive semidefinite operator $\rho$ acting on a composite Hilbert space $\mathcal{H}_1 \otimes \cdots \otimes \mathcal{H}_n$ such that $\tmop{tr} (\rho) = 1$.
Each factor $\mathcal{H}_i$ with $i = 1, \ldots, n$ represents the local Hilbert space of the $i^{\text{th}}$ party.
For a subset $S \subseteq \{ \mathcal{H}_i \}_i$, we denote by $\rho_{(S)} = \tmop{tr}_{\overline{S}} (\rho)$ the density matrix of the reduced state on the subsystems $S$, where $\overline{S}$ is the complement of $S$.
We say that an $n$-partite state is fully separable if it can be written as a convex mixture of product states as follows:
\begin{equation}
  \label{Eq:FullySep} \rho = \sum_j w_j \rho^j_1 \otimes \cdots \otimes \rho^j_n, \qquad \sum_j w_j = 1,
\end{equation}
where the $\{ \rho^j_i \}$ are normalized density matrices and the weights $w_j$ are nonnegative.
If $\rho$ does not admit a decomposition of the
form~{\eqref{Eq:FullySep}}, we say that it is entangled.
The problem with the definition of full separability is that any technology capable of entangling, say, the first two particles could claim the generation of ``entangled states'' composed of arbitrarily many particles.
Indeed, the reader can check that any state $\hat{\rho}$ of the form
\begin{equation}
  \label{Eq:MultiExample} \hat{\rho} = \proj{\phi^+}
  \otimes \rho_{(\mathcal{H}_3, \ldots, \mathcal{H}_n)}, \quad | \phi^+
  \rangle = \frac{| 00 \rangle + | 11 \rangle}{\sqrt{2}}
\end{equation}
does not admit a decomposition of the form of Eq.~{\eqref{Eq:FullySep}}.

In order to address this issue, an extended definition of multipartite separability was proposed~{\cite{wrongDef,wrongDef2,wrongDef3}}.
Intuitively, a state is $k$-partite entangled if, in order to produce it, one must create $k$-partite entangled states and distribute them among the $n$ parties in such a way that no party receives more than one subsystem.
More formally, we say that an $n$-partite state is separable with respect to a partition $S_1 |\dots| S_s$ of $\{\mathcal{H}_1, \dots, \mathcal{H}_n \}$ if it can be expressed as
\begin{equation}
  \rho = \sum_j w_j \rho_{(S_1)}^j \otimes \cdots \otimes \rho_{(S_s)}^j.
\end{equation}
An $n$-partite state is {\tmem{genuinely $k$-partite entangled}} (or has entanglement depth $k$) if it cannot be expressed as a convex combination of quantum states, each of which is separable with respect to at least one
partition $S_1 |S_2 |\dots$ of $\{1, \dots, n\}$ with $|S_{\ell} | \leq k{\shortminus}1$, for all $\ell$.
Using this definition, the state $\hat{\rho}$ in Eq.~{\eqref{Eq:MultiExample}} is certainly genuinely $2$-entangled.
However, $\hat{\rho}$ is \emph{not} genuinely $3$-entangled so long as its marginal $\hat{\rho}_{(\mathcal{H}_3, \ldots, \mathcal{H}_n)}$ is fully separable.

This notion of multipartite entanglement is easy to cheat, as we show next.
For simplicity, we consider a tripartite scenario ($n\,{=}\,3$) and rename the Hilbert spaces $\mathcal{A}$, $\mathcal{B}$ and $\mathcal{C}$; we split $\mathcal{A}$ into three local subsystems $\mathcal{A}'$, $\mathcal{A}''$ and $\mathcal{A}'''$, the same for $\mathcal{B}$ and $\mathcal{C}$.
Now, let \mbox{$\rho_{\mathcal{A}' \mathcal{B}' \mathcal{C}'} = | \phi^+ \rangle \langle \phi^+ |_{\mathcal{A}' \mathcal{B}'} \otimes \proj{0}_{\mathcal{C}'}$}, and similarly $\rho_{\mathcal{A}'' \mathcal{B}'' \mathcal{C}''} = | \phi^+ \rangle \langle \phi^+ |_{\mathcal{B}'' \mathcal{C}''} \otimes \proj{0}_{\mathcal{A}''}$ while \mbox{$\rho_{\mathcal{A}''' \mathcal{B}''' \mathcal{C}'''} = | \phi^+ \rangle \langle \phi^+ |_{\mathcal{C}''' \mathcal{A}'''} \otimes \proj{0}_{\mathcal{B}'''}$}.
Following the same discussion as for $\hat{\rho}$, each of these three states individually is genuinely $2$-entangled but not genuinely $3$-entangled.
However, if we consider those three states \emph{collectively} (i.e., distributed \emph{at the same time}), then the resulting state $\rho_{\mathcal{ABC}} = \rho_{\mathcal{A}' \mathcal{B}' \mathcal{C}'} \otimes \rho_{\mathcal{A}'' \mathcal{B}'' \mathcal{C}''} \otimes \rho_{\mathcal{A}''' \mathcal{B}''' \mathcal{C}'''}$ is genuinely $3$-entangled when considering the partition $\mathcal{A|B|C}$.
Accordingly, the established definition of genuine $k$-partite entanglement is unstable under parallel composition (i.e., under simultaneous distribution of states).

In fact, enough copies of the state $\rho_{\mathcal{A}\mathcal{B}\mathcal{C}}$ enable the distribution of \emph{any} tripartite state using the standard quantum teleportation protocol~{\cite{teleportation}}. Any definition of genuine tripartite entanglement that regarded states like $\rho_{\mathcal{A} \mathcal{B} \mathcal{C}}$ as \emph{not} genuinely tripartite entangled and, at the same time, were stable under composition and local operations and classical communication (LOCC), would thus be necessarily void. Namely, it would not apply to any physical tripartite state.

In this paper, we introduce the concept of {\tmem{genuine network $k$-entanglement}}, an alternative operational definition of multipartite entanglement that is stable under composition and where $\rho_{\mathcal{A} \mathcal{B} \mathcal{C}}$ is not genuinely tripartite entangled. The drawback, as it will be evident from the definition, is that non-genuine network entanglement is not closed under LOCC, but under the subset of LOCC transformations known as Local Operations and Shared Randomness (LOSR)~\cite{Buscemi2012,schmid2019typeindependent}. This set of operations has been argued to be more relevant than LOCC for the study of Bell nonlocality~\cite{wolfe2020quantifying,schmid2020standard}. Note that LOSR is a natural set of operations when the parties being distributed the states are separated in space and do hot hold a quantum memory.

\paragraph{Genuine network entanglement.}
We explain our definition using an adversarial approach. Eve is a vendor selling a source of tripartite quantum states to three honest scientists Alice, Bob and Charlie. Eve pretends that her device produces a valuable entangled tripartite state $\rho_{\A\B\C}$. Unbeknown to the scientists, the source sold to them is actually composed of cheaper components: quantum sources that produce the bipartite entangled states $\sigma_{\A'\B''},\sigma_{\C'\A''},\sigma_{\B'\C''}$, see Figure~\ref{Fig:Network}. Alice receives the $\A',\A''$ subsystems of the states $\sigma_{\A'\B''},\sigma_{\C' \A''}$. Those can in principle interact within Alice's experimental setup, giving rise to a new quantum system $\A$: that is what Alice eventually probes. Similarly, Bob (resp. Charlie) will have access to system $\B$ (resp. $\C$), whose state is the result of a deterministic interaction between systems $\B',\B''$ (resp. $\C',\C''$). In addition, we provide Eve with unlimited shared randomness $\Lambda$ to jointly influence the local operations acting on systems $\A'\A''$, $\B'\B''$ and $\C'\C''$. It is worth remarking that we do not make any assumption on the dimensionality of the ``hidden'' states $\sigma_{\A'\B''},\sigma_{\C'\A''},\sigma_{\B'\C''}$: even if the systems $\A,\B,\C$ accessible to Alice, Bob and Charlie are a qubit each, the Hilbert space dimension of the hidden systems might well be infinite.

\begin{figure}[ht]
  \includegraphics[width=5 cm]{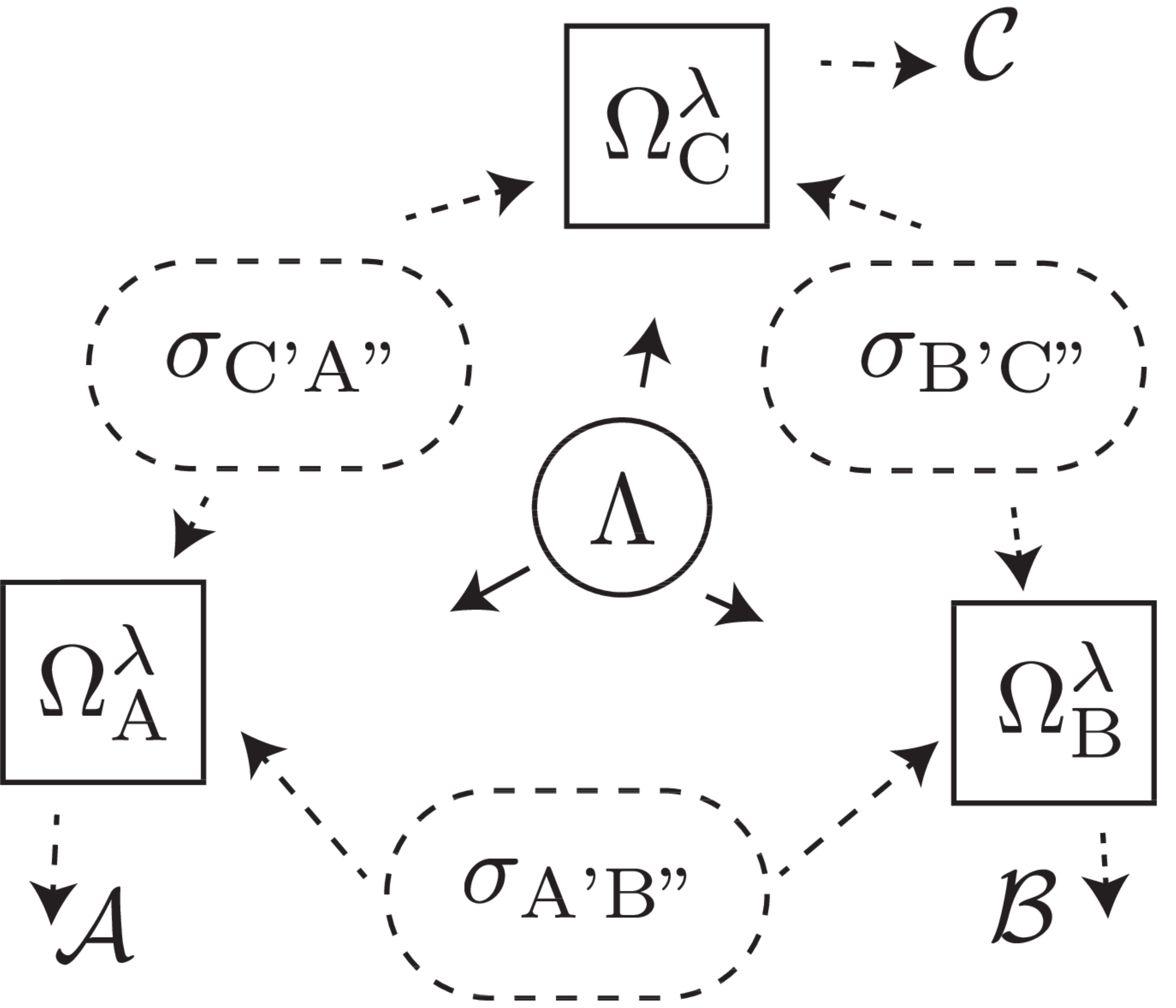}
  \caption{\label{Fig:Network}Network producing a nongenuine network
  $3$-entangled state; quantum resources and spaces are denoted using dotted
  lines, while classical variables are drawn using solid lines.}
\end{figure}

By performing local tomography on the state $\rho_{\mathcal{ABC}}$, can Alice, Bob and Charlie certify that the state produced by Eve's network is indeed a valuable tripartite quantum state?

The family of states that they try to rule out can be defined formally.
Let $\Lambda$ be a classical random variable with distribution $P_{\Lambda} (\lambda)$ sent to the three labs (for example through radio broadcast). Denoting by $\mathsf{B} (\mathcal{H})$ the set of bounded operators on the Hilbert space $\mathcal{H}$, we describe the deterministic operation at Alice's by a family of linear maps $\left\{ \Omega_{\mathcal{A}}^{\lambda} \right\}_{\lambda}$, where each $\Omega_{\mathcal{A}}^{\lambda}$ has type \[ \Omega_{\mathcal{A}}^{\lambda} : \mathsf{B} (\mathcal{A}' \otimes \mathcal{A}'') \rightarrow \mathsf{B} (\mathcal{A}) \] and each $\Omega_{\mathcal{A}}^{\lambda}$ is completely positive and trace preserving.
For completeness, the other maps correspond to \mbox{$\Omega_{\mathcal{B}}^{\lambda} : \mathsf{B} (\mathcal{B}' \otimes \mathcal{B}'') \rightarrow \mathsf{B} (\mathcal{B})$} and $\Omega_{\mathcal{C}}^{\lambda} : \mathsf{B} (\mathcal{C}' \otimes \mathcal{C}'') \rightarrow \mathsf{B} (\mathcal{C})$, so that the state $\rho_{\mathcal{ABC}}$ is
\begin{equation}
  \label{Eq:NotGNE} \rho_{\mathcal{ABC}} \,=\, \sum P_{\Lambda} (\lambda) \left[
  \Omega_{\mathcal{A}}^{\lambda} \otimes \Omega_{\mathcal{B}}^{\lambda} \otimes
  \Omega_{\mathcal{C}}^{\lambda} \right] \left( \sigma \right),
\end{equation}
where $\sigma = \sigma_{\mathcal{A'B''}} \otimes \sigma_{\mathcal{B'C''}} \otimes \sigma_{\mathcal{C'A''}}$.

The valuable states, those \emph{genuinely network 3-entangled}, are those that cannot be written the way described by Eq.~\eqref{Eq:NotGNE}. It is easy to see that the set of states of the form of Eq.~\eqref{Eq:NotGNE} is closed under tensor products and LOSR transformations. That is, the set of network $2$-entangled states is a self-contained class within the resource theory of LOSR entanglement~\cite{Buscemi2012,schmid2019typeindependent}. This property has obvious implications for the monotonicity of any network $3$-entanglement measure. Think for instance of the robustness of entanglement \cite{robustness}. We could define its network $3$-entanglement generalization as the minimum amount of network $2$-entangled noise $R(\rho_{\A\B\C})$ which one must add to a tripartite quantum state $\rho_{\A\B\C}$ to make it network $2$-entangled. Closure under LOSR implies that $R(\rho_{\A\B\C})$ is monotonically decreasing under LOSR operations. From our motivating discussion, though, it follows that $R(\rho_{\A\B\C})$ can be arbitrarily increased by means of LOCC protocols.

Note that, in the considered adversarial scenario, rather than the state $\sigma_{\A'\B''}\otimes\sigma_{\C'\A''}\otimes\sigma_{\B'\C''}$, Eve could also distribute Alice, Bob and Charlie arbitrary convex combinations of states of the form $\sigma^{(i)}_{\A'\B''}\otimes\sigma^{(i)}_{\C'\A''}\otimes\sigma^{(i)}_{\B'\C''}$, for some values of $i$. Since the dimensionality of the primed spaces is unbounded, though, this strategy can be simulated with the operations allowed by Eq.~\eqref{Eq:NotGNE}. Indeed, it suffices to  distribute the tensor product of the states $\sigma^{(i)}_{\A'\B''}\otimes\sigma^{(i)}_{\C'\A''}\otimes\sigma^{(i)}_{\B'\C''}$ and embed the index $i$ within the hidden variable $\Lambda$ (whose dimension is also unbounded). The index $i$ would then signal in which pair of Hilbert spaces at party $Z$'s the map $\Omega_\mathcal{Z}^\lambda$ is to be applied.

The definition of genuine network entanglement can be straightforwardly extended to the $n$-partite case.

\begin{definition}
A multipartite quantum state is {\tmem{genuinely network $k$-entangled}} if it cannot be generated by distributing entangled states among subsets of maximum ${k{\shortminus}1}$ parties, and letting the parties apply local trace-preserving maps, those maps being possibly correlated through global shared randomness.
\end{definition}

\paragraph{Witnesses of genuine network entanglement.}
The certification of $\rho_{\mathcal{ABC}}$ being genuinely network \mbox{$3$-entangled} is complicated, as the dimensions of the Hilbert spaces $\mathcal{A}', \ldots,\,\mathcal{C}''$ are in principle unbounded.
To classify the degree of a state's network multipartiteness, we must somehow determine if the state can come about from a particular quantum causal process.
The study of quantum causal processes has experienced great progress~\cite{Chaves2015,Costa2016,quantNetwork,qinflation,Barrett2019QCM}, and many techniques have recently been developed~\cite{qinflation,Pozas2019,Bowles2019}. Herein, we adapt the inflation technique for causal inference~{\cite{inflation,qinflation}} in order derive witnesses for genuine network entanglement.

As a starter, we consider a three qudit state $\rho_{\A\B\C}$, and quantify its proximity to the Greenberger-Horne-Zeilinger (\textsf{GHZ}) state~{\cite{GHZ}} via the fidelity
\begin{align}
F_{\mathsf{GHZ}_d}&\equiv \bra{\mathsf{GHZ}_d}\rho_{\A\B\C}\ket{\mathsf{GHZ}_d},\\\nonumber
\text{where}\quad\ket{\mathsf{GHZ}_d }&=\sum_{i = 1}^d \frac{\ket{ i i i
  }}{ \sqrt{d}}.
\end{align}

If $\rho_{\mathcal{ABC}}$ is of the form of Eq.~\eqref{Eq:NotGNE}, then there exists a random variable $\Lambda$, quantum states $\sigma_{\mathcal{A'B''}}$, $\sigma_{\mathcal{B'C''}}$ and $\sigma_{\mathcal{C'A''}}$ and families of completely positive and trace-preserving (CPTP) maps $\left\{ \Omega_{\mathcal{A}}^{\lambda} \right\}_{\lambda}$, $\left\{ \Omega_{\mathcal{B}}^{\lambda} \right\}_{\lambda}$ and $\left\{ \Omega_{\mathcal{C}}^{\lambda} \right\}_{\lambda}$ that generate $\rho_{\mathcal{ABC}}$.
To derive bounds on the maximum fidelity achievable by network $2$-entangled states, we next imagine what states one could prepare by combining multiple realizations of the above state and channel resources. As we will see, some of the reduced density matrices of the resulting many-body \emph{inflated states} are fully determined by the original tripartite state $\rho_{\A\B\C}$. The property of $\rho_{\A\B\C}$ admitting a decomposition of the form of Eq.~\eqref{Eq:NotGNE} will then be relaxed to that of admitting positive semidefinite inflated states satisfying said linear constraints. In the language of \cite{inflation}, we will be defining a nonfanout inflation of the causal scenario depicted in Figure \ref{Fig:Network}.

In this regard, consider the \emph{ring inflation} scenario depicted in Figure~\ref{fig:Inflation}. If one acts on two copies of the states $\sigma_{\mathcal{A'B''}}$, $\sigma_{\mathcal{B'C''}}$ and $\sigma_{\mathcal{C'A''}}$ with the maps $\left\{ \Omega_{\mathcal{A}}^{\lambda} \right\}_{\lambda}$, $\left\{ \Omega_{\mathcal{B}}^{\lambda} \right\}_{\lambda}$ and $\left\{ \Omega_{\mathcal{C}}^{\lambda} \right\}_{\lambda}$ in the ways indicated in the figure, one obtains the six-partite density matrices
$\tau_{\A_1 \B_1 \C_1 \A_2 \B_2 \C_2}$ and
$\gamma_{\A_3 \B_3 \C_3 \A_4 \B_4 \C_4}$.
Those are essentially unknown to us, as we do not know how Eve's devices act
when they are wired differently.

\begin{figure}[t]
  \includegraphics{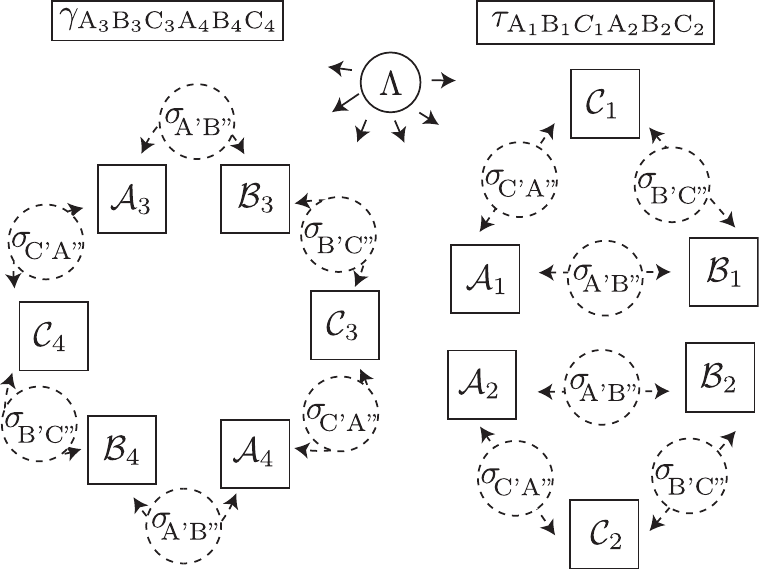}
  \caption{\label{fig:Inflation}Ring inflation of the triangle scenario in Figure~\ref{Fig:Network}, containing copies of the state processing devices $\Omega_{\mathcal{A}, \mathcal{B}, \mathcal{C}}^{\lambda}$; we label such copies according to their output Hilbert space $\mathcal{A}_i, \mathcal{B}_j, \mathcal{C}_k$, where $i, j, k$ is the index of the copy. These devices process copies of the quantum resources $\sigma_{\mathcal{A'B''}}$, $\sigma_{\mathcal{B'C''}}$ and  $\sigma_{\mathcal{C'A''}}$. To simplify the drawing, we omitted the indices
  of these copies and only indicate their original type. Note that, despite the fact that the \emph{wirings} between states and CPTP maps are different than in the original scenario, every copy of a CPTP map acts on copies of the states determined by the original scenario. }
\end{figure}

However, the states $\tau$ and $\gamma$ are subject to several consistency constraints. To begin, with, $\tau$ is symmetric under the exchange of systems ${\A_1\B_1\C_1}$ by systems ${\A_2\B_2\C_2}$, and so is $\gamma$ under the exchange of ${\A_3\B_3\C_3}$ by ${\A_4\B_4\C_4}$. In addition, we observe that
\begin{equation}
  \label{Eq:TwoTriangles} \tau_{\left( \A_1 \B_1 \C_1 \right)} =
  \tau_{\left( \A_2 \B_2 \C_2 \right)} = \rho_{\A\B\C} .
\end{equation}
\noindent Still, we cannot say that $\tau_{\A_1 \B_1 \C_1 \A_2
\B_2 \C_2} = \rho_{\A\B\C} \otimes \rho_{\A\B\C}$ as the
production of the two triangles could be classically correlated through the
shared randomness $\Lambda$. However, the state $\tau$ is separable across the
$\A_1 \B_1 \C_1 / \A_2 \B_2 \C_2$ partition. Both $\gamma$ and $\tau$ are related to each other through the constraints
\begin{equation}
  \label{Eq:Ring} \gamma_{\left( \A_3 \B_3 \A_4 \B_4
  \right)} = \tau_{\left( \A_1 \B_1 \A_2 \B_2 \right)}
\end{equation}
and $\gamma_{\left( \B_3 \C_3 \B_4 \C_4 \right)} =
\tau_{\left( \B_1 \C_1 \B_2 \C_2 \right)}$ and
$\gamma_{\left( \C_3 \A_4 \C_4 \A_3 \right)} =
\tau_{\left( \C_1 \A_1 \C_2 \A_2 \right)}$. Furthermore, $\tau$ and $\gamma$ have trace one and are semidefinite positive.
Finally, the reduced state $\gamma_{\left( \A_3 \B_3 \C_3
\B_4 \right)}$ is separable across the $\A_3 \B_3 \C_3
/ \B_4$ partition; and additional constraints of that type follow from cyclic symmetry.

Let us now provide some intuition as to why any state $\rho_{\A\B\C}$ admitting such extensions ${\tau,\gamma}$ cannot be arbitrarily close to the \textsf{GHZ} state. Suppose, indeed, that $F_{\mathsf{GHZ}_d}=1$, i.e., $\rho_{\A\B\C}= \proj{\mathsf{GHZ}_d}$ and that there exist extensions ${\gamma,\tau}$ satisfying the constraints above. A measurement in the computational basis of the sites ${\A_3,\B_3,\C_3}$ of $\gamma$ will generate the random variables ${a_3, b_3, c_3}$. Since $\gamma_{(\A_3\B_3)}=\rho_{(\A\B)}=\frac{1}{d}\sum_{i=1}^d\proj{i,i}$, it must be the case that ${a_3, b_3}$ are perfectly correlated. The same considerations hold for $b_3$ and $c_3$. Since ${a_3,b_3}$ and ${b_3,c_3}$ are pair-wise perfectly correlated, so are ${a_3,c_3}$. Now, from the condition $\gamma_{\left(\C_3 \A_4 \C_4 \A_3 \right)} =
\tau_{\left(\C_2 \A_2 \C_1 \A_1 \right)}$, we have that the distribution of $c_3$ and $a_3$ must be the same as that of $c_2$ and $a_1$. Hence, $c_2$ and $a_1$ must be perfectly correlated. However, $\tau_{(\A_1\B_1\C_1)}$ is a pure state, since $\tau_{(\A_1\B_1\C_1)}=\rho_{\A\B\C}=\proj{\mathsf{GHZ}_d}$, and hence it must be in a product state with respect to any other system, such as $\C_2$. It follows that a measurement in the computational basis of the sites $\A_1$ and $\C_2$ will produce two uncorrelated random variables ${a_1,c_2}$. We thus reach a contradiction.

The previous argument just invalidates the case $F_{\mathsf{GHZ}_d}=1$. A more elaborate argument (see Appendix A for a proof) shows that, if ${a,b,c}$ are the random variables resulting from measuring $\rho_{\A\B\C}$ locally, then any network 2-entangled state $\rho_{\A\B\C}$ must satisfy
\begin{align}\begin{split}
H(a:b) + H(b:c)&-H(b)\leq\\ &S(\rho_{(\A)})+S(\rho_{\A\B\C})-S(\rho_{(\B\C)}).
\label{neumann}
\end{split}\end{align}
\noindent Here $H(x)$, $H(x:y)$ and $S(\rho)$ respectively denote the Shannon entropy of variable $x$, the mutual information between the random variables $x,y$, and the von Neumann entropy of state $\rho$. Condition (\ref{neumann}) is clearly violated if $\rho_{\A\B\C}\approx\proj{\mathsf{GHZ}_d}$ and the measurements are carried in the computational basis.

Another constraint satisfied by states satisfying Eq.~\eqref{Eq:NotGNE}, expressed in terms of the \textsf{GHZ} fidelity, is
\be
F_{\mathsf{GHZ}_d}\leq \frac{2d (3 d + \sqrt{2d-1})}{1-2 d + 9 d^2}.
\label{fidelius}
\ee
\noindent Remarkably, in order to derive Eqs. \eqref{neumann} and \eqref{fidelius}, it is not necessary to invoke the existence of the six-partite states $\tau, \gamma$, but that of their reduced density matrices $\tau_{(\A_1\B_1\C_1\C_2)}$, $\gamma_{(\A_3\B_3\C_3)}$. As shown in Appendix B, both expressions, Eqs.~\eqref{neumann} and \eqref{fidelius}, can be generalized to detect genuine network $k$-entanglement.

For $d{=}2$, Eq.~\eqref{fidelius} establishes that any tripartite state with $F_{\mathsf{GHZ}_2}>\frac{4}{33} \left(6+\sqrt{3}\right)\approx 0.9372$ is genuinely network $3$-entangled. As it turns out, this inequality is not tight: it can be improved to $F_{\mathsf{GHZ}_2}>\frac{1+\sqrt{3}}{4}\approx 0.6803$ by means of semidefinite programming applied to the ring inflation.

The variables in the corresponding program are trace-one positive semidefinite matrices $\tau_{\A_1 \B_1 \C_1 \A_2 \B_2
\C_2}$ and $\gamma_{\A_3 \B_3 \C_3 \A_4
\B_4 \C_4}$ of size $64 \times 64$, subject to linear constraints of the form of Eqs.~\eqref{Eq:TwoTriangles} and \eqref{Eq:Ring}, as well as to the permutational symmetry $1\leftrightarrow 2$, $3\leftrightarrow 4$. For all states $\mu_{\X\Y}$ separable across a $\X/\Y$
partition, we add a Positivity under Partial Transposition (PPT) constraint ${\left( \mu_{\X\Y}
\right)^{_{\top_{\Y}}} \!\!\!\underaccent{\mathsf{ PSD}}{\succeq} 0}$~\cite{PPT}. This applies to $\tau$
across the $\A_1 \B_1 \C_1 / \A_2 \B_2 \C_2$
partition, and to reduced states of $\gamma$ for the partitions $\A_3
\B_3 \C_3 / \B_4$, $\B_3 \C_3 \A_4 /
\C_4$, $\C_3 \A_4 \B_4 / \A_3$.

The bound \mbox{$F_{\mathsf{GHZ}_2}>\frac{1+\sqrt{3}}{4}$} is obtained by maximizing $\left\langle \mathsf{GHZ}_2 \middle| \rho_{\mathcal{ABC}} \middle| \mathsf{GHZ}_2 \right\rangle$ subject to the constraints above---a typical instance of a semidefinite program---using the optimization toolbox CVX~{\cite{cvx}} and the solver \textsc{Mosek}~{\cite{mosek}}.

We also employed the semidefinite optimization procedure using as reference the \textsf{W} state~{\cite{W_state}}, ${\ket{\mathsf{W}} \equiv\, \displaystyle \frac{\ket{001} + \ket{010} + \ket{100}}{\sqrt{3}}}$, concluding that any $3$-qubit state $\rho_{\mathcal{ABC}}$ with $\bra{\mathsf{W}} \rho_{\mathcal{ABC}} \ket{\mathsf{W}} > 0.7602$ is genuinely network $3$-entangled.

Armed with these witnesses, we find that several past experiments in quantum optics can be interpreted as demonstrations of genuine network tripartite entanglement~\cite{exp1,exp2,exp3,exp4}.
Indeed, in all those experiments, the fidelity of the prepared states with respect to \textsf{GHZ} or \textsf{W} states is greater than the bounds derived above for network-bipartite states. The prepared states are thus certified to contain genuine network tripartite entanglement.

\paragraph{Robustness to detection inefficiency.}
In many experimental setups, due to low detector efficiencies, the carriers transmitting the quantum information are often unobserved.
The standard prescription in such a predicament consists in discarding the experimental data gathered when not all detectors click. Coming back to our adversarial setup, this postselection of measurement results opens a loophole that Eve can in principle exploit to fool Alice, Bob, and Charlie.
It is possible to contemplate this contingency in the calculations above, and thus bound the detection efficiency needed for certifying genuine network entanglement under post-selection.

Let $p$ indicate the fraction of experimental data preserved by postselection, i.e., the probability that all three detectors click. If $\rho^p_\mathcal{ABC}$ is the state reconstructed after postselection, then all that can be said about the true tripartite quantum state $\rho_\mathcal{ABC}$ before the postselection took place is that
\be\label{eq:postselection}
  \rho_\mathcal{ABC}-p\times \rho^p_\mathcal{ABC}\succeq 0.
\ee
As before, linear optimizations over the set of postselected states $\rho^p_\mathcal{ABC}$ can be conducted via semidefinite programming.
In such instances one continues to relate the inflated states $\tau$ and $\gamma$ to the true (albeit unknown) tripartite state $\rho_\mathcal{ABC}$, and Eq.~\eqref{eq:postselection} is merely added as an extra constraint.
We find critical postselection probabilities beyond which one can still certify genuine network tripartite entanglement via \textsf{GHZ} fidelity ($p_c\approx 0.685$) or \textsf{W} fidelity ($p_c\approx 0.765$).

\paragraph{Conclusions.}
In this paper we have argued that the standard definition of genuine multipartite entanglement is not appropriate to assess the quantum control over an ever-growing number of quantum systems.
We proposed an alternative definition, \emph{genuine network multipartite entanglement}, that captures the potential of a source to distribute entanglement over a number of spatially separated parties.
We provided analytic and numerical tools to detect genuine network tripartite entanglement, and also indicated how the definition can be adapted to situations where there may be local postselections on each party's lab. Furthermore, the construction can be adapted to detect genuine network $n$-partite entanglement for any $n$.

While quite general, our numerical methods to detect genuine network entanglement demand considerable memory resources to the point that we were not able to derive new entanglement witnesses for tripartite qutrit states in a normal computer.
In addition, there exist significant gaps between the bounds we derived on \textsf{GHZ} and \textsf{W} state fidelities via SDP relaxations and the lower bounds obtained using standard variational techniques~\cite{seeSaw1,seeSaw2}.
Using such algorithms, we were not able to give lower bounds to the \textsf{GHZ} and \textsf{W} fidelities larger than $0.5170$ and $2/3$, respectively.
A topic for future research is thus to develop better techniques for the characterization of genuine network multipartite entanglement.

\paragraph{Note added.} After completing this manuscript, we became aware of the work of \cite{kraft2020quantum,luo}, whose authors consider a scenario very similar to that depicted in Figure \ref{Fig:Network}. Crucially, they restrict the maps $\Omega^\lambda_{\A,\B,\C}$ to be unitary transformations, acting on convex combinations of bipartite states. The restriction to unitary maps not only allows upper-bounding the dimensionality of the source states $\sigma_{\A'\B''},\sigma_{\C'\A''},\sigma_{\B'\C''}$, but it also severely constrains the resulting set of states $\Delta_C$: as shown in \cite{kraft2020quantum}, tripartite qubit states in $\Delta_C$ cannot be genuinely tripartite entangled. This contrasts with the \textsf{GHZ} fidelity greater than $1/2$ reported above, achievable by states of the form of Eq.~\eqref{Eq:NotGNE}.

\begin{acknowledgments}\paragraph{Acknowledgments.}
We thank Antonio Ac\'in, Jean-Daniel Bancal, T.C.~Fraser, Yeong-Cherng Liang, David Schmid, and Robert~W.~Spekkens for useful discussions.
M.N. was supported by the Austrian Science fund (FWF) stand-alone project P 30947.
The work of A.P.-K. was supported by Fundaci\'o Obra Social ``la Caixa'' (LCF/BQ/ES15/10360001), the ERC (CoG QITBOX and the European Union's Horizon 2020 research and innovation programme - grant agreement No 648913), the Spanish MINECO (FIS2016-80773-P and Severo Ochoa SEV-2015-0522), Fundaci\'o Cellex, and Generalitat de Catalunya (SGR 1381 and CERCA Programme).
This research was supported by Perimeter Institute for Theoretical Physics. Research at Perimeter Institute is supported in part by the Government of Canada through the Department of Innovation, Science and Economic Development Canada and by the Province of Ontario through the Ministry of Economic Development, Job Creation and Trade.
This publication was made possible through the support of a grant from the John Templeton Foundation. The opinions expressed in this publication are those of the authors and do not necessarily reflect the views of the John Templeton Foundation.
\end{acknowledgments}

\bibliographystyle{apsrev4-2}
\nocite{apsrev41Control}
\bibliography{entanglement2}

\appendix
\section{Appendix A: analytic witnesses for \\genuine network tripartite entanglement}\label{sec:simpleproofs}
The goal of this appendix is to prove the following result:
\begin{theo}
\label{bounds}
Let $\rho_{\A\B\C}\in \mathsf{B}({\mathbb C}^d\otimes {\mathbb C}^d\otimes {\mathbb C}^d)$ be a tripartite quantum state, and let $a,b,c$ be the outcomes which result when we locally probe subsystems $\A,\B,\C$. If $\rho_{\A\B\C}$ is not genuinely network $3$-entangled, then it must satisfy the relations:
\begin{align}
&\bra{\mathsf{GHZ}_d}\rho\ket{\mathsf{GHZ}_d}\leq \frac{2d (3 d + \sqrt{2d-1})}{1-2 d + 9 d^2},\label{result1}\\
&H(a:b) + H(b:c)-H(b)\leq S(\A)+S(\A|\B\C),
\label{result2}
\end{align}
\noindent where $S(\A)$, $S(\A|\B\C)$ respectively denote the von Neumann entropy of $\rho_{(\A)}$ and the conditional entropy of system $\A$ with respect to $\B\C$, i.e., $S(\rho_{\A\B\C})-S(\rho_{(\B\C)})$.
\end{theo}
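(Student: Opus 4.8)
The plan is to turn the ring-inflation discussion of the main text into a rigorous relaxation and then show the relaxation is infeasible whenever \eqref{result1} or \eqref{result2} fails. Assume $\rho_{\A\B\C}$ is not genuinely network $3$-entangled and fix a decomposition of the form \eqref{Eq:NotGNE}. Acting with copies of the maps $\Omega^\lambda_{\A,\B,\C}$ on two copies of each source, wired as in Figure~\ref{fig:Inflation}, produces the six-partite states $\tau$ and $\gamma$; I keep only the reduced states $\mu:=\tau_{(\A_1\B_1\C_1\C_2)}$ and $\nu:=\gamma_{(\A_3\B_3\C_3)}$, which inherit the following constraints: $\mu,\nu\succeq0$ with unit trace; $\mu$ is separable across $\A_1\B_1\C_1\,|\,\C_2$ (hence $\mu_{(\A_1\C_2)}$ is separable and the conditional entropies $S(\A_1\B_1\C_1\,|\,\C_2)_\mu$ and $S(\C_2\,|\,\A_1\B_1\C_1)_\mu$ are nonnegative); $\mu_{(\A_1\B_1\C_1)}=\rho_{\A\B\C}$ by \eqref{Eq:TwoTriangles}; $\nu_{(\A_3\B_3)}=\rho_{(\A\B)}$ and $\nu_{(\B_3\C_3)}=\rho_{(\B\C)}$; and $\nu_{(\A_3\C_3)}=\mu_{(\A_1\C_2)}$, obtained from the cyclic form of \eqref{Eq:Ring} together with the $1\!\leftrightarrow\!2$ exchange symmetry of $\tau$. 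It then suffices to show that these conditions cannot all hold once $\rho_{\A\B\C}$ is too close to $\ket{\mathsf{GHZ}_d}$ in either quantified sense.

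For \eqref{result1} I would make the ``perfect correlation versus product'' contradiction of the main text quantitative. With $F:=\bra{\mathsf{GHZ}_d}\rho\ket{\mathsf{GHZ}_d}$, the marginal identities $\nu_{(\A_3\B_3)}=\rho_{(\A\B)}$ and $\nu_{(\B_3\C_3)}=\rho_{(\B\C)}$ place these two marginals within an $O(\sqrt{1-F})$ trace-norm ball of the maximally correlated classical state $\Phi_d:=\tfrac1d\sum_i\proj{ii}$, and a triangle/union bound on outcomes of computational-basis measurements of $\nu$ then places $\nu_{(\A_3\C_3)}$ in a comparable ball around $\Phi_d$. On the other hand $\nu_{(\A_3\C_3)}=\mu_{(\A_1\C_2)}$, and since $\mu$ is separable across $\A_1\B_1\C_1\,|\,\C_2$ with $\mu_{(\A_1\B_1\C_1)}=\rho$ close to the pure state $\proj{\mathsf{GHZ}_d}$, a gentle-measurement argument gives $\mu\approx\rho\otimes\mu_{(\C_2)}$ and hence $\nu_{(\A_3\C_3)}\approx\rho_{(\A)}\otimes\rho_{(\C)}$, a product of two near-maximally-mixed qudits. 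Since $\Phi_d$ lies at trace distance $\tfrac{2(d-1)}{d}$ from the maximally mixed two-qudit state, the two estimates are incompatible unless $1-F$ is bounded below. The exact threshold follows by carrying these inequalities with their optimal prefactors and minimising over the one scalar that appears when bounding the relevant overlap; the $\sqrt{2d-1}$ is then just the surd produced when the resulting quadratic in $F$ is solved, and I would check the answer against the quoted $d=2$ value $\tfrac{4}{33}(6+\sqrt3)$. I expect this estimation, rather than any conceptual step, to be the bulk of the work.

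For \eqref{result2} I would argue purely entropically on $\mu$ and $\nu$. Measuring $\A_3,\B_3,\C_3$ of $\nu$ in the bases defining $a,b,c$ gives a distribution with $\tilde\A_3\tilde\B_3\sim(a,b)$ and $\tilde\B_3\tilde\C_3\sim(b,c)$, so the left side of \eqref{result2} equals $H(\tilde\A_3)+H(\tilde\B_3)+H(\tilde\C_3)-H(\tilde\A_3\tilde\B_3)-H(\tilde\B_3\tilde\C_3)$. I would then combine: strong subadditivity and the Araki--Lieb inequality on $\nu$ with $\B_3$ as the pivot system (trading the entropies of $\rho_{(\A\B)},\rho_{(\B\C)},\rho_{(\B)}$ against $S(\nu_{(\A_3\C_3)})$); data processing under the local measurements (measurement cannot increase mutual information, and the Shannon entropy of a measured marginal dominates the corresponding von Neumann entropy); and the nonnegativity of the conditional entropies of the separable state $\mu$, which through $\mu_{(\A_1\B_1\C_1)}=\rho_{\A\B\C}$ and $\mu_{(\A_1\C_2)}=\nu_{(\A_3\C_3)}$ is precisely what feeds $S(\rho_{\A\B\C})$ and the negative term $-S(\rho_{(\B\C)})$ into the right side while leaving $S(\rho_{(\A)})$ on the $\A$ wing intact. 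The main obstacle is exactly this orchestration: the naive chain $H(a:b)\le I(\A:\B)_\rho$, $H(b:c)\le I(\B:\C)_\rho$, $H(b)\ge S(\rho_{(\B)})$ is saturated by $\proj{\mathsf{GHZ}_d}$ and therefore never certifies it, so one must genuinely use $\nu$ as a joint state together with the separability of $\mu$, routing SSA through the right systems and measuring only the $\B$ and $\C$ wings so that every sign lands correctly. Taking contrapositives in both cases gives Theorem~\ref{bounds}.
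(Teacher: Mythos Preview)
Your inflation setup and the marginal constraints you extract from $\mu=\tau_{(\A_1\B_1\C_1\C_2)}$ and $\nu=\gamma_{(\A_3\B_3\C_3)}$ are exactly what the paper uses, and the conceptual skeleton (transitivity through $\B_3$, then bound correlations between $\A_1$ and $\C_2$ by near-purity of $\rho$) is right. The execution, however, diverges from the paper in two places.

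For \eqref{result2} you place the weight on separability of $\mu$ and on Araki--Lieb. The paper does not use separability at all here: after the classical strong-subadditivity step $H(a:b)+H(b:c)-H(b)\le H(a_3:c_3)=H(a_1:c_2)$ and data processing $H(a_1:c_2)\le S(\A_1{:}\C_2)_\mu$, the decisive inequality is \emph{weak monotonicity} on the four-partite state $\mu$, namely $S(\A_1|\C_2)_\mu+S(\A_1|\B_1\C_1)_\mu\ge 0$, which immediately yields $S(\A_1{:}\C_2)_\mu\le S(\A_1)_\mu+S(\A_1|\B_1\C_1)_\mu=S(\A)+S(\A|\B\C)$. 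Trying to reach the same endpoint via nonnegativity of $S(\C_2|\A_1\B_1\C_1)_\mu$ or $S(\A_1|\C_2)_{\mu_{(\A_1\C_2)}}$ (the separable-state facts you cite) leaves you with the wrong combination of terms; in particular, the latter only gives $S(\A_1{:}\C_2)\le S(\A_1)$, which neither implies nor is implied by the stated bound when $S(\A|\B\C)<0$. So your orchestration worry is well-founded, and the resolution is simply to swap in weak monotonicity.

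For \eqref{result1} your trace-norm/gentle-measurement plan encodes the right intuition, but the paper obtains the exact constant through a short operator computation rather than asymptotic estimates. Writing $E=\sum_i\proj{i}\otimes\id^{\otimes 2}\otimes\proj{i}$ and $P_0=\proj{\mathsf{GHZ}_d}$, one has the identity $(P_0\otimes\id)E(P_0\otimes\id)=\tfrac1d\,P_0\otimes\id$, from which positivity of the $2\times2$ matrix $\omega_{ij}=\tr[\mu (P_i\otimes\id)E(P_j\otimes\id)]$ gives $P(a_1=c_2)\le \tfrac{F}{d}+(1-F)+2\sqrt{F(1-F)/d}$. On the other side, $P(a=b)\ge F$ follows from $P_0SP_0=P_0$ with $S=\sum_i\proj{i}^{\otimes2}\otimes\id$, and the classical union bound through $b_3$ then gives $2F-1\le P(a_1=c_2)$. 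Solving the resulting quadratic in $F$ produces \eqref{result1} exactly. Your route could in principle be tuned to the same constant, but the $2\sqrt{F(1-F)/d}$ cross term comes from the off-diagonal of $\omega$, not from a generic Fuchs--van de Graaf or gentle-measurement bound, so you would effectively have to rediscover this identity to avoid losing a factor.
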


\noindent The intuition behind the proofs of both inequalities is the same. First, we assume that the state $\rho_{\A\B\C}$ admits the six-partite extensions $\tau_{\A_1 \B_1 \C_1 \A_2 \B_2 \C_2}$ and $\gamma_{\A_3 \B_3 \C_3 \A_4 \B_4 \C_4}$ described in the main text. Then we prove that a strong correlation between the random variables $a_3, b_3$ and $b_3, c_3$ implies a strong correlation between the variables $a_3, c_3$, and hence a strong correlation between the variables $a_1, c_2$. Next, we show that the correlation between the variables $a_1, c_2$ is upper bounded in some way by the purity of the original state $\rho_{\A\B\C}$. To obtain one bound or another we rely on different measures of correlation and purity.

The following lemma will establish the transitivity of strongly correlated variables.
\begin{lemma}
\label{transitivity}
Let $x,y,z$ be jointly distributed random variables. Then, the following inequalities hold:
\begin{align}
&P(x=z)\geq P(x=y)+P(y=z)-1,\label{trans1}\\
&H(x:z)\geq H(x:y)+H(y:z)-H(y).\label{trans2}
\end{align}
\end{lemma}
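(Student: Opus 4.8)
The plan is to prove the two bounds independently, as they are really of different natures. For \eqref{trans1} I would reason directly with events in the underlying probability space. The crucial (and essentially the only) observation is the set inclusion $\{x=y\}\cap\{y=z\}\subseteq\{x=z\}$: agreement of $x$ with $y$ and of $y$ with $z$ forces agreement of $x$ with $z$. Monotonicity of the probability measure then gives $P(x=z)\geq P(\{x=y\}\cap\{y=z\})$, and one application of the elementary bound $P(E\cap F)\geq P(E)+P(F)-1$ (which is just $P(E\cup F)\leq 1$ rewritten) closes the argument. No joint-independence hypothesis is needed, consistent with $x,y,z$ being merely jointly distributed, and I expect no difficulty here.

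For \eqref{trans2} I would use only the chain rule for mutual information and nonnegativity of conditional mutual information. Writing $H(x:y)$ for Shannon mutual information and $H(\cdot|\cdot)$ for conditional entropy, the chain of steps is: (i) $H(x:y)\leq H(x:yz)$, since $H(x:yz)=H(x:y)+H(x:z|y)$ and $H(x:z|y)\geq 0$; (ii) expand the same quantity in the other order, $H(x:yz)=H(x:z)+H(x:y|z)$; (iii) use $H(x:y|z)\leq H(y|z)$, valid because a conditional mutual information never exceeds the conditional entropy of either of its arguments. Concatenating (i)--(iii) yields $H(x:y)\leq H(x:z)+H(y|z)$, and substituting $H(y|z)=H(y)-H(y:z)$ and rearranging gives precisely \eqref{trans2}.

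The argument involves no new ideas beyond standard information-theoretic identities; the only thing to be careful about is bookkeeping the chain-rule expansions in the two different orders, which is routine. If one wishes to emphasize the parallel with \eqref{trans1}, one may note that \eqref{trans2} is its entropic counterpart --- intuitively, $y$ acting as a near-perfect relay of information from $x$ to $z$ --- but the computation above is the most economical route, and is what I would write out in full.
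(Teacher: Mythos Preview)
Your proposal is correct. For \eqref{trans2} your argument is essentially the paper's: the two ingredients you use --- nonnegativity of $H(x{:}z\,|\,y)$ and the bound $H(x{:}y\,|\,z)\leq H(y\,|\,z)$ --- are exactly strong subadditivity and the inequality $H(x,z)\leq H(x,y,z)$, which is how the paper phrases it; you have just repackaged them via the chain rule.

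For \eqref{trans1} your route is a bit different and cleaner. The paper expands $P(x{=}y)+P(y{=}z)-P(x{=}z)$ as an explicit sum over atoms of the joint distribution, drops a negative term, and recognizes the remainder as a sum of probabilities of disjoint events, hence at most $1$. Your event-level argument (the inclusion $\{x{=}y\}\cap\{y{=}z\}\subseteq\{x{=}z\}$ followed by the union bound $P(E\cap F)\geq P(E)+P(F)-1$) gets to the same place without ever opening up $P(x,y,z)$, and it makes the ``transitivity'' intuition completely transparent. The paper's version has the minor advantage of showing exactly which probability mass is being discarded, but yours is the more economical proof.
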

\begin{proof}
Let $P(x,y,z)$ be the joint probability distribution of the three variables. Then we have that
\begin{align}
&P(x=y)+P(y=z)-P(x=z)=\sum_{i}P(i,i,i)+\nonumber\\
&\sum_{j\not=i}P(i,i,j)+P(j,i,i)-P(i,j,i)\nonumber\\
&\leq \sum_{k}P(i,i,i)+\sum_{j\not=i}P(i,i,j)+P(j,i,i).
\end{align}

\noindent The right-hand side of the above equation contains the probabilities of a set of incompatible events. Its sum is thus bounded by $1$, hence proving inequality (\ref{trans1}).

To prove Eq.~\eqref{trans2}, we invoke strong subadditivity. Namely, for any three random variables $x,y,z$, it holds that
\be
H(x,y,z)\leq H(x,y)+H(y,z)-H(y).
\ee
\noindent The left-hand side of the equation above can be lower bounded by $H(x,z)$. It follows that $H(x:z)=H(x)+H(z)-H(x,z)$ is lower bounded by $H(x)+H(z)-H(x,y)-H(y,z)+H(y)$. This, in turn, equals the right-hand side of Eq.~\eqref{trans2}.
\end{proof}

The next lemma will relate the purity of a tripartite state with the correlations it can establish with other systems.

\begin{lemma}
\label{weak_corre}
Consider a four-partite quantum state $\sigma_{\A\B\C\Y}$, with $F=\bra{\mathsf{GHZ}_d}\sigma_{(\A\B\C)}\ket{\mathsf{GHZ}_d}$, and suppose that $a,y$ are the result of measuring systems $\A,\Y$ in the computational basis, then the inequality
\be
P(a=y)\leq 1+\left(\frac{1}{d}{-}1\right)F +2\sqrt{\frac{F(1{-}F)}{d}}
\label{weak1}
\ee
\noindent holds. Moreover, independently of the nature of the measurements, the relation
\be
H(a:y)\leq S(\A)+S(\A|\B\C)
\label{weak2}
\ee
\noindent is satisfied.
\end{lemma}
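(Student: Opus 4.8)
The plan is to establish the two inequalities separately, in both cases by passing to a purification of $\sigma_{\A\B\C\Y}$ and exploiting the rigid index structure of $\ket{\mathsf{GHZ}_d}$. For \eqref{weak1}, I would fix a purification $\ket{\Psi}_{\A\B\C\Y\mathcal{Z}}$ of $\sigma_{\A\B\C\Y}$ and note that $(\bra{\mathsf{GHZ}_d}_{\A\B\C}\otimes\id_{\Y\mathcal{Z}})\ket{\Psi}$ has squared norm $\bra{\mathsf{GHZ}_d}\sigma_{(\A\B\C)}\ket{\mathsf{GHZ}_d}=F$, so that $\ket{\Psi}=\sqrt{F}\,\ket{\mathsf{GHZ}_d}_{\A\B\C}\ket{\phi}_{\Y\mathcal{Z}}+\sqrt{1-F}\,\ket{\eta}$ for some unit vector $\ket{\phi}$ and some unit vector $\ket{\eta}$ annihilated by $\proj{\mathsf{GHZ}_d}_{\A\B\C}\otimes\id_{\Y\mathcal{Z}}$. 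Writing $\Pi=\sum_{i=1}^{d}\proj{i}_{\A}\otimes\proj{i}_{\Y}$ (acting as the identity on the remaining systems) for the projector onto the event $a=y$, we have $P(a=y)=\bra{\Psi}\Pi\ket{\Psi}$, which I would expand into a diagonal term, an $\ket{\eta}$ term, and a cross term.

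Using $\tr_{\B\C}\proj{\mathsf{GHZ}_d}=\id_{\A}/d$, the diagonal term equals $F\,\bra{\mathsf{GHZ}_d}\bra{\phi}\,\Pi\,\ket{\mathsf{GHZ}_d}\ket{\phi}\le F/d$, and the $\ket{\eta}$ term is at most $(1-F)\bra{\eta}\Pi\ket{\eta}\le 1-F$. For the cross term I would use $(\bra{i}_{\A}\otimes\id_{\B\C})\ket{\mathsf{GHZ}_d}=\ket{ii}_{\B\C}/\sqrt{d}$ to rewrite $\bra{\mathsf{GHZ}_d}\bra{\phi}\Pi\ket{\eta}=\tfrac1{\sqrt{d}}\sum_{i}\braket{v_i}{\eta}$, where $\ket{v_i}=\ket{i}_{\A}\ket{ii}_{\B\C}\ket{i}_{\Y}\ket{\phi_i}_{\mathcal{Z}}$ and $\ket{\phi_i}=(\bra{i}_{\Y}\otimes\id_{\mathcal{Z}})\ket{\phi}$. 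Since the $\ket{v_i}$ carry distinct computational labels on $\A$, they are pairwise orthogonal, hence $\|\sum_{i}\ket{v_i}\|^{2}=\sum_{i}\|\phi_i\|^{2}\le\|\phi\|^{2}=1$, and Cauchy--Schwarz gives $|\bra{\mathsf{GHZ}_d}\bra{\phi}\Pi\ket{\eta}|\le 1/\sqrt{d}$, so the cross term is at most $2\sqrt{F(1-F)/d}$. Adding the three contributions yields $P(a=y)\le F/d+(1-F)+2\sqrt{F(1-F)/d}$, which is exactly \eqref{weak1}.

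For \eqref{weak2}, I would start from the data-processing inequality: since $a$ and $y$ are produced by local quantum channels (the measurements) applied to $\A$ and $\Y$, the classical mutual information obeys $H(a:y)\le I(\A:\Y)_\sigma=S(\A)+S(\Y)-S(\A\Y)$, all entropies referring to reduced states of $\sigma$. It then suffices to show $S(\Y)-S(\A\Y)\le S(\A|\B\C)$. Working again with the purifying system $\mathcal{Z}$, complementarity of von Neumann entropies gives $S(\Y)=S(\A\B\C\mathcal{Z})$ and $S(\A\Y)=S(\B\C\mathcal{Z})$, so the left-hand side equals $S(\A|\B\C\mathcal{Z})$, and $S(\A|\B\C\mathcal{Z})\le S(\A|\B\C)$ is precisely strong subadditivity written as ``conditioning on an extra system cannot increase the conditional entropy.''

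The purification bookkeeping and the two textbook facts (data processing and strong subadditivity) are routine. The one genuinely delicate step is the cross-term estimate for \eqref{weak1}: the crude bound $|\braket{\cdot}{\cdot}|\le1$ only gives $2\sqrt{F(1-F)}$, and recovering the extra $1/\sqrt{d}$ --- the factor responsible for the sharp value $P(a=y)\le1/d$ at $F=1$ --- relies on noticing that the structure of $\ket{\mathsf{GHZ}_d}$ forces the auxiliary vectors $\ket{v_i}$ to be mutually orthogonal before Cauchy--Schwarz is applied.
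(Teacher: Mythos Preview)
Your proof is correct. Both arguments rest on the same decomposition---splitting $\sigma_{(\A\B\C)}$ along the projectors $P_0=\proj{\mathsf{GHZ}_d}$ and $P_1=\id-P_0$ and bounding the diagonal and cross contributions separately---but the execution differs. The paper works directly at the density-matrix level: it observes the operator identity $(P_0\otimes\id)E(P_0\otimes\id)=\tfrac{1}{d}\proj{\mathsf{GHZ}_d}\otimes\id$ for $E=\sum_i\proj{i}_{\A}\otimes\id^{\otimes 2}\otimes\proj{i}_{\Y}$, builds the $2\times 2$ matrix $\omega_{ij}=\tr[\sigma(P_i\otimes\id)E(P_j\otimes\id)]$, notes that $\omega\succeq 0$, and bounds the cross term via $|\omega_{01}|\le\sqrt{\omega_{00}\omega_{11}}$ with $\omega_{00}=F/d$ and $\omega_{11}\le 1-F$. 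You instead purify, expand $\ket{\Psi}$ in the $P_0/P_1$ blocks, and obtain the crucial extra $1/\sqrt{d}$ in the cross term from the orthogonality of the auxiliary vectors $\ket{v_i}$ together with Cauchy--Schwarz. The paper's route is a bit shorter and avoids introducing the ancilla $\mathcal{Z}$; your route makes the origin of the $1/\sqrt{d}$ factor more tangible.

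For \eqref{weak2} the paper simply invokes weak monotonicity, $S(\A|\B\C)+S(\A|\Y)\ge 0$, to get $-S(\A|\Y)\le S(\A|\B\C)$ and then applies data processing. Your purification step---rewriting $S(\Y)-S(\A\Y)$ as $S(\A|\B\C\mathcal{Z})$ and then using that conditioning on an extra system does not increase conditional entropy---is precisely the standard derivation of weak monotonicity from strong subadditivity, so the two arguments are equivalent.
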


\begin{proof}
Suppose that we measure systems $\A$ and $\Y$ in the computational basis, and define the operator
\be
E\equiv \sum_{i=1}^d\proj{i}\otimes \id^{\otimes 2}\otimes \proj{i}.
\ee
\noindent Then, $P(a=y)=\tr[E\sigma]$. Furthermore, one can verify that
\be
(P_0\otimes \id)E(P_0\otimes \id)=\frac{1}{d}\proj{\mathsf{GHZ}_d}\otimes\id,
\label{S_princ}
\ee
\noindent where $P_0,P_1$ are the projectors defined by $P_0=\proj{\mathsf{GHZ}_d}$, $P_1=\id-P_0$.

We have that
\be
\tr\left[E\sigma\right]=\sum_{i,j=0,1}\omega_{ij},
\ee
\noindent where $\omega$ is the $2\times 2$ matrix defined by
\be
\omega_{ij}=\tr\big[\sigma (P_i\otimes \id) E (P_j\otimes \id)\big].
\ee
\noindent $\omega$ is positive semidefinite. Indeed, take an arbitrary vector $\ket{c}$. Then,
\be
\bra{c}\omega\ket{c}=\tr\left[\sigma \left(\sum_ic^*_iP_i\otimes \id\right)E\left(\sum_jc_jP_j\otimes \id\right)\right]\geq 0,
\ee
\noindent where the last inequality stems from the fact that both $\sigma $and $E$ are positive semidefinite.

From the positive-semidefiniteness of $\omega$ it follows that $|\omega_{01}|\leq \sqrt{\omega_{00}\omega_{11}}$. On the other hand, by (\ref{S_princ}) we have that
\be
\omega_{00}=\frac{1}{d}\tr\left[\sigma\left(\proj{\mathsf{GHZ}_d}\otimes\id_d\right)\right]=\frac{F}{d}.
\ee
\noindent In addition, ${\omega_{11}=\tr[\tilde{\sigma}E]}$, where $\tilde{\sigma}$ is the positive semidefinite operator defined by
\be
\tilde{\sigma}\equiv (P_1\otimes\id)\sigma (P_1\otimes\id).
\ee
\noindent Note that ${\tr\left[\tilde{\sigma}\right]=\tr\left[\sigma_{(\A\B\C)}P_1\right]=1{-}F}$. Since the operator $E$ has norm $1$, it follows that ${\omega_{11}=\tr\left[\tilde{\sigma}E\right]\leq 1{-}F}$. Putting all together, we have that
\be
P(a=y)\leq \frac{F}{d}+ 1-F+2\sqrt{\frac{F(1-F)}{d}}.
\ee
\noindent This proves Eq. \eqref{weak1}.

Let ${\alpha,\text{ }\beta}$ be two quantum systems. By ${S(\alpha|\beta)=S(\alpha \beta)-S(\beta)}$ we denote the conditional quantum information; by ${S(\alpha:\beta)=S(\alpha)+S(\beta)-S(\alpha \beta)}$, the quantum mutual information. To prove Eq. \eqref{weak2}, we invoke weak monotonicity, namely, the fact that for any three quantum subsystems $\alpha,\beta,\gamma$, ${S(\alpha|\beta)+S(\alpha|\gamma)\geq 0}$.  Taking ${\alpha=\A}$, ${\beta=\B\C}$, ${\gamma =\Y}$, we have that ${-S(\A|\Y)\leq S(\A|\B\C)}$. By the data processing inequality it thus follows that
\begin{align}
H(a:y)&\leq S(\A:\Y)=S(\A)-S(\A|\Y)\\\nonumber
 &\leq S(\A)+S(\A|\B\C).\tag*{\qedhere}
\end{align}
\end{proof}

Having reached this point, we are ready to prove part of Theorem \ref{bounds}. Choose Positive Operator Valued Measures (POVMs) $M_\A$, $M_\B$, $M_\C$ and use them to probe the type-$\A$, type-$\B$ and type-$\C$ subsystems of $\gamma$ and $\tau$, thus obtaining the random variables $a_1,a_2,a_3,a_4$, $b_1,b_2,b_3,b_4$, $c_1,c_2,c_3,c_4$. From the constraints $\rho_{(\A\B)}=\gamma_{(\A_3\B_3)}$, $\rho_{(\B\C)}=\gamma_{(\B_3\C_3)}$ and Lemma \ref{transitivity}, we arrive at the relations
\begin{align}
&P(a=b) + P(b=c) - 1=\nonumber\\
&P(a_3=b_3) + P(b_3=c_3) - 1\leq P(a_3=c_3),
\label{interm1}
\end{align}
\noindent and
\begin{align}
&H(a:b)+H(b:c)-H(b)=\nonumber\\
&H(a_3:b_3)+H(b_3:c_3)-H(b_3)\leq H(a_3:c_3),
\label{interm2}
\end{align}
\noindent where $a,b,c$ is the result of locally measuring $\rho_{\A\B\C}$ according to the POVMs $M_\A$, $M_\B$, and $M_\C$.

On the other hand, ${\gamma_{(\A_3\C_3)}=\tau_{(\A_1\C_2)}}$, and hence ${H(a_3:c_3)=H(a_1:c_2)}$ and ${P(a_3=c_3)=P(a_1=c_2)}$. Invoking Lemma \ref{weak_corre}, the right hand side of Eq.~\eqref{interm2} is upper bounded by the right-hand side of Eq. \eqref{weak2}. This proves Eq. \eqref{result2}.

If $M_\A$, $M_\B$, $M_\C$ moreover correspond to measurements in the computational basis, then we can invoke again Lemma~\ref{weak_corre} to bound the right-hand side of Eq. \eqref{interm1} with the right-hand side of Eq. \eqref{weak1}. This gives:
\begin{align}\begin{split}
P(a=b) + P(b=c&) - 1\leq\\& 1+\left(\frac{1}{d}-1\right)F +2\sqrt{\frac{F(1-F)}{d}}.
\label{casi}
\end{split}\end{align}
\noindent To prove Eq. \eqref{result1}, we need to lower bound $p(a=b)$ and $p(b=c)$ in terms of the \textsf{GHZ} fidelity. The necessary bound is provided by the next lemma.

\begin{lemma}
\label{strong_corre}
Let $\rho_{\A\B\C}\in \mathsf{B}({\mathbb C}^d\otimes {\mathbb C}^d\otimes {\mathbb C}^d)$ be a tripartite quantum state, and let $F=\bra{\mathsf{GHZ}_d}\rho\ket{\mathsf{GHZ}_d}$ be its \textup{\textsf{GHZ}} fidelity. If we measure any two systems $\A,\B,\C$ in the computational basis, the corresponding random variables $a,b$ will satisfy:
\be
P(a=b)\geq F.
\ee
\end{lemma}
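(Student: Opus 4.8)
The plan is to reduce the claim to a single operator inequality. By the permutation symmetry of both $\ket{\mathsf{GHZ}_d}$ and the statement, it suffices to treat the pair $\A,\B$. A computational-basis measurement of these two systems yields $P(a=b)=\tr\!\left[\Pi\,\rho_{\A\B\C}\right]$, where $\Pi\equiv\sum_{i=1}^{d}\proj{i}_\A\otimes\proj{i}_\B\otimes\id_\C$. Hence the lemma follows once I show $\tr[\Pi\,\rho]\ge\tr[\proj{\mathsf{GHZ}_d}\rho]=F$, and since $\rho\succeq 0$ this is implied by the operator inequality $\Pi\succeq\proj{\mathsf{GHZ}_d}$.

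To establish $\Pi\succeq\proj{\mathsf{GHZ}_d}$, first I would note that $\Pi$ is itself an \emph{orthogonal projector}: it is a sum over $i$ of the projectors $\proj{i}_\A\otimes\proj{i}_\B\otimes\id_\C$, which are mutually orthogonal because $\proj{i}_\A\proj{j}_\A=0$ for $i\ne j$; its range is $\mathrm{span}\{\ket{ii}_{\A\B}:i=1,\dots,d\}\otimes\C$. Next I would check that $\ket{\mathsf{GHZ}_d}=\frac{1}{\sqrt{d}}\sum_i\ket{iii}$ lies in this range, i.e.\ $\Pi\ket{\mathsf{GHZ}_d}=\ket{\mathsf{GHZ}_d}$. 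For any orthogonal projector $\Pi$ and any unit vector $\ket{v}$ with $\Pi\ket{v}=\ket{v}$ one has $\Pi=\proj{v}+\Pi'$, with $\Pi'$ the projector onto the orthogonal complement of $\ket{v}$ inside $\mathrm{range}(\Pi)$, so $\Pi-\proj{v}=\Pi'\succeq 0$. Applying this with $\ket{v}=\ket{\mathsf{GHZ}_d}$ gives $\Pi\succeq\proj{\mathsf{GHZ}_d}$, and taking the trace against $\rho\succeq 0$ closes the argument.

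I do not expect a genuine obstacle here; the proof is a one-line operator comparison. The only points that warrant a moment's care are the reduction to a fixed pair of systems (immediate from the symmetry of $\ket{\mathsf{GHZ}_d}$ under relabelling the parties) and the observation that $\Pi$ is not merely positive semidefinite but an actual projector whose range contains $\ket{\mathsf{GHZ}_d}$ — it is precisely this fact that yields the clean bound $P(a=b)\ge F$ rather than a weaker quantitative estimate.
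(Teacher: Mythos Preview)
Your proof is correct and essentially the same as the paper's: both define the same operator (the paper calls it $S$, you call it $\Pi$), observe that $\ket{\mathsf{GHZ}_d}$ lies in its range, and conclude $\tr[\Pi\rho]\geq F$. The only cosmetic difference is that you phrase the key step as the operator inequality $\Pi\succeq\proj{\mathsf{GHZ}_d}$ (using that $\Pi$ is a projector containing $\ket{\mathsf{GHZ}_d}$ in its range), whereas the paper writes out the block decomposition $\tr[S\rho]=\tr[P_0SP_0\rho]+\tr[P_1SP_1\rho]$ with $P_0=\proj{\mathsf{GHZ}_d}$ and notes the cross terms vanish and the second term is nonnegative.
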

\begin{proof}
Without loss of generality, suppose that we measure systems $\A$ and $\B$, obtaining the random variables $a$ and $b$, respectively. Notice that $P(a=b)=\tr\left[S\rho\right]$, with $S$ defined by
\be
S\equiv \sum_{i=1}^d \proj{i}^{\otimes 2}\otimes \id_d.
\ee
\noindent Then one can verify that
\begin{align}\begin{split}
&P_0SP_0=P_0=\proj{\mathsf{GHZ}_d},\\& P_1SP_0=P_0SP_1 =0.
\end{split}\end{align}
\noindent It follows that
\begin{align}
P(a=b)&=\tr\left[S\rho\right]=\sum_{i,j=0,1}\tr\left[\rho P_i S P_j\right]=\\\nonumber
&\tr\big[\proj{\mathsf{GHZ}_d}\rho\big]+\tr\left[P_1SP_1\rho\right]\geq F.\tag*{\qedhere}
\end{align}
\end{proof}

Now, use the previous lemma to lower bound the left-hand side of Eq. \eqref{casi} by $2F-1$. Solving the inequality for $F$, we arrive at Eq. \eqref{result1}.

\section{Appendix B: analytic witnesses for \\genuine network $k$-entanglement}\label{sec:generalproofs}

The arguments leading to Theorem \ref{bounds} can be extended to detect network $k$-entanglement. Consider this time a $k$-partite quantum state $\rho_{\X^0\dots\X^{k{\shortminus}1}}$, and suppose that it can be generated by applying correlated local maps to $(k{\shortminus}1)$-partite quantum states of the form ${\{\sigma^{0\oplus j,\dots,(k{\shortminus}2) \oplus j}:j=0,\dots,k{\shortminus}1\}}$, where $\oplus$ indicates addition modulo $k$. As in the tripartite case, we consider a double network with nodes ${\X^0_1, \X^1_1, \dots, \X^{k{\shortminus}1}_1, \X^0_2, \X^1_2, \dots, \X^{k{\shortminus}1}_2}$. For ${j=0,\dots,k{\shortminus}1}$, we distribute two copies of the states ${\sigma^{0\oplus j,\dots,(k{\shortminus}2) \oplus j}}$ to systems ${\X^j_1, \X^{1\oplus j}_1, \dots, \X^{(k{\shortminus}1)\oplus j}_1}$, ${\X^j_2, \X^{1\oplus j}_2, \dots, \X^{(k{\shortminus}1)\oplus j}_2}$, respectively. By applying the maps $\Omega_{\X^i}^{\lambda}$ over systems $\X_1^i$, $\X_2^i$ and averaging over $\lambda$, we obtain a $2k$-partite quantum state $\tau_{\X^0_1  \dots \X^{k{\shortminus}1}_1 \X^0_2 \dots \X^{k{\shortminus}1}_2}$ with $\tau_{(\X^0_1 \dots \X^{k{\shortminus}1}_1)}=\tau_{(\X^0_2 \dots \X^{k{\shortminus}1}_2)}=\rho_{\X^0\dots\X^{k{\shortminus}1}}$.

On the contrary, consider the systems ${\X^0_3, \X^1_3, \dots, \X^{k{\shortminus}1}_3, \X^0_4, \X^2_4, \dots, \X^{k{\shortminus}1}_4}$, order them as ${0,1,\dots,2k{\shortminus}1}$, and distribute each state ${\sigma^{0\oplus j,\dots,(k{\shortminus}2) \oplus j}}$ to the systems ${{0{+}j (\mbox{mod } 2k)},\dots,k{-}1{+}j (\mbox{mod }2k)}$, for ${j=0,\dots,2k{\shortminus}1}$. Applying the maps $\Omega_{\X^i}^{\lambda}$ and averaging over $\lambda$, we end up with a $2k$-partite state $\gamma_{0,\dots,2k{\shortminus}1}$ with the following properties:

\begin{compactenum}
\item $\gamma_{(i,i+1)}=\rho_{(\X^i\X^{i\oplus 1})}$. This is so because, in the previous construction, each node $i$ shares with node ${i{+}1}$ the ${k{-}2}$ states ${\{\sigma^{\dots,i,i{+}1},\dots, \sigma^{i, i{+}1,\dots}\}}$. These are all the states which those two nodes would have shared had they been part of the network that built $\rho_{\X^0\dots\X^{k{\shortminus}1}}$. Hence, their joint state must correspond to to the latter's reduced state $\rho_{(\X^i\X^{i\oplus 1})}$.

\item $\gamma_{\X_3^{0}\X_3^{k{\shortminus}1}}=\tau_{\X_1^0\X_2^{k{\shortminus}1}}$. This follows from the fact that the states used to generate $\gamma$ were distributed in such a way that no states are shared by systems $i$ and ${i{+}k{-}1 (\mbox{mod }2k)}$.
\end{compactenum}

We are ready to derive new witnesses. Say that the original state $\rho_{\X^0\dots\X^{k{\shortminus}1}}$ has a high fidelity with the $k$-partite \textsf{GHZ} state, i.e.,
\begin{align}
F_{\mathsf{GHZ}_d^k}&\equiv \bra{\mathsf{GHZ}_d^k}\rho\ket{\mathsf{GHZ}_d^k},\\\nonumber\;\text{where}\quad\ket{\mathsf{GHZ}_d^k}&=\sum_{j=1}^d \frac{\ket{j}^{\otimes k}}{\sqrt{d}}.
\end{align}

We locally measure $\rho$ in some basis, obtaining the random variables ${x^0,\dots,x^{k{\shortminus}1}}$. If $F_{\mathsf{GHZ}_d^k}$ is high enough; and the measurement basis is close to the computational one, then one should expect to find a high correlation between $x^i$, $x^j$, for ${i,j =0,\dots,k{\shortminus}1}$. This implies that a measurement of $\gamma$ in the computational basis will produce random variables ${x^0_3,\dots,x^{k{\shortminus}1}_4}$ with very high coincidence probability $P(x^i_3=x^{i+1}_3)$ and mutual information $H(x_3^i:x_3^{i+1})$ between neighboring sites. Applying Lemma \ref{transitivity} recursively, and taking into account that the distributions of $x_3^i$, $x_3^{i+1}$ and $x^i$, $x^{i+1}$ are the same, we have that
\begin{align}
&\sum_{i=0}^{k{\shortminus}2}H(x^i:x^{i+1})-\sum_{i=1}^{k{\shortminus}2} H(x^i)\leq H(x_3^0:x_3^{k{\shortminus}1})\\
\text{and}\quad&\sum_{i=0}^{k{\shortminus}2}P(x^i=x^{i+1})-k+2\leq P(x_3^0=x^{k{\shortminus}1}).
\end{align}

\noindent In turn, the right-hand sides of the above equations respectively equal $H(x_1^0:x_2^{k{\shortminus}1})$ and $P(x_1^0=x_2^{k{\shortminus}1})$.

The proofs of Lemmas \ref{weak_corre} and \ref{strong_corre} easily generalize to the case of $k$ parties: it amounts to replacing expressions such as $\proj{i}\otimes \id^{\otimes 2}\otimes\proj{i}$ by $\proj{i}\otimes \id^{\otimes k{\shortminus}1}\otimes\proj{i}$, and redefining systems $\A,\B,\C,\Y$ as $\A = \X_1^0$, $\B=\X_1^1$, $\C=\X_1^2\dots\X_1^{k{\shortminus}1}$, $\Y=\X_2^{k{\shortminus}1}$. Putting all together, we arrive at the entropic inequality
\begin{align}\begin{split}\label{eq:entropicgeneral}
\sum_{i=0}^{k{\shortminus}2}H(x^i:x^{i+1})-&\sum_{i=1}^{k{\shortminus}2} H(x^i)\leq\\ &S(\X^0)+S(\X^0|\X^1\dots\X^{k{\shortminus}1}),
\end{split}
\end{align}
\noindent valid for arbitrary physical measurements of subsystems ${\X^0,\dots,\X^{k{\shortminus}1}}$, and at
\begin{align}
&(k{-}1)F-k+2\leq 1+\left(\frac{1}{d}{-}1\right)F+2\sqrt{\frac{F(1{-}F)}{d}},
\end{align}
where $F$ is shorthand for $F_{\mathsf{GHZ}_d^k}$.

Equivalently, one can solve for $F_{\mathsf{GHZ}_d^k}$ and write the last witness in linear form as
\begin{align}\label{eq:fidelitygeneral}
 F_{\mathsf{GHZ}_d^k}\leq \displaystyle\frac{d \left(3 - k (d {+} 1) +k^2 d + 2\sqrt{2 + k (d {-} 1) - d}\right)}{1 + 4 d - 2 dk + k^2 d^2},
\end{align}
where inequalities \eqref{eq:entropicgeneral} and \eqref{eq:fidelitygeneral} are satisfied by all states not genuinely network $k$-entangled.

\end{document}